\newtheorem{theorem}{Theorem}[section]
\newtheorem{lemma}[theorem]{Lemma}
\newtheorem{proposition}[theorem]{Proposition}
\theoremstyle{definition}
\newtheorem*{remark}{Remark}
\numberwithin{equation}{section}
\newcommand{\pd}[2]{\frac{\partial{#1}}{\partial{#2}}}
\newcommand{\pdl}[2]{\frac{\delta{#1}}{\delta{#2}}}
\newcommand{\set}[1]{\left\{#1\right\}}
\newcommand{\inmp}{\int_{-\infty}^{+\infty}}
\newcommand{\R}{\mathbb R}
\newcommand{\DD}{\mathcal{D}}
\newcommand{\EE}{\mathcal{E}}
\newcommand{\FF}{\mathcal{F}}
\newcommand{\HH}{\mathcal{H}}
\newcommand{\KK}{\mathcal{K}}
\newcommand{\LL}{\mathcal{L}}
\newcommand{\MM}{\mathcal{M}}
\newcommand{\bfv}{\mathbf{v}}
\newcommand{\bfw}{\mathbf{w}}
\newcommand{\bfN}{\mathbf{N}}
\newcommand{\bfx}{\mathbf{x}}
\newcommand{\bfX}{\mathbf{X}}
\newcommand{\wt}{\widetilde}
\DeclareMathOperator{\grad}{\nabla} %
\DeclareMathOperator{\curl}{curl} %
\DeclareMathOperator{\divg}{div} %
\begin{document}

\title{Variational Principles for Water Waves}%
\author{B. Kolev \\C.M.I., Universit\'{e} de Provence \\Marseille, France
    \and D. H. Sattinger \\ Yale University \\New Haven, USA}

\date{March 8, 2006}%

\maketitle

% ----------------------------------------------------------------

\begin{abstract}
We describe the Hamiltonian structures, including the Poisson
brackets and Hamiltonians, for  free boundary problems for
incompressible fluid flows with vorticity. The Hamiltonian structure
is used to obtain variational principles for stationary gravity
waves both for irrotational flows as well as flows with vorticity.
\end{abstract}

% ----------------------------------------------------------------

\section{Introduction}

In 1933 Friedrichs \cite{Fri33} proposed the functional
\begin{equation*}
    J(\psi) = \iiint\limits_{0\le\psi\le 1}\left[(\grad\psi)^2+v^2(x,y)\right]\,d^2\bfx,
\end{equation*}
where $\psi$ is the stream function for an incompressible flow, as a
variational method of obtaining solutions to free boundary value
problems. Critical points of $J$ are harmonic functions which
satisfy the condition
\begin{equation*}
    (\grad\psi)^2=v^2
\end{equation*}
on the free boundary, given by $\psi=1$. The free boundary condition
relevant to theory of gravity waves, however, is the Bernoulli
equation
\begin{equation*}
    \frac{(\grad\varphi)^2}{2}+g\zeta = constant,
\end{equation*}
where $\varphi$ is either the velocity potential for irrotational
flow, or the stream function in the case of flows with vorticity.
Thus some other variational principle is needed for the study of
gravity waves.

Recently, a variational principle for gravity waves with vorticity
was given by Constantin et. al. \cite{ASS06}, using a direct, "hands
on" approach. More generally, a variational principle for a
stationary wave may be obtained for systems possessing a Hamiltonian
structure by minimizing the Hamiltonian computed in a Galilean frame
moving with the wave. We illustrate that approach in this study.

We begin with a brief review of Euler's equations of incompressible
flows, and the associated free boundary value problems; in
\S\ref{PS} we describe the Hamiltonian structure of these problems,
for irrotational flows and flows with vorticity, as given by Lewis
et.al. \cite{LMMR86}. All the functions under consideration in this
article, including the free boundaries, are assumed to be smooth.

%-------------------------------------------------------------

\section{Incompressible fluid flows}\label{iff}

Let the velocity field of an incompressible fluid in a fixed region
$\DD$ be denoted by ${\bf v}$. The incompressibility of the fluid is
expressed by the condition $\divg\,\bfv=0.$ We must have $\bfv \cdot
\nu=0$ on the boundary of $\DD$, where $\nu$ is the outward unit
normal at the boundary. Euler's equations of motion for the flow of
an inviscid, incompressible fluid are
\begin{gather}
\frac{d\bfx}{dt}=\bfv, \qquad
\rho\frac{d\bfv}{dt}=\rho(\bfv_t+(\bfv\cdot\grad)\bfv)=
-\grad(\,p-{\bf g}\cdot\bfx),\label{euler1}\\[4mm] \divg\,\bfv=0,\nonumber
\end{gather}
where $\rho$ is the density, $p$ is the hydrodynamic pressure, and
${\bf g}\cdot \bfx$ is the gravitational potential. Henceforth we
take $\rho=1$.

Given a manifold $\DD\in\R^3$ with smooth boundary, we denote by
$\LL^2(\DD)$ the Hilbert space of vector fields on $\DD$ with the
inner product
\begin{equation*}
    \langle \bfv,\bfw\rangle=\iiint_\DD \bfv\cdot\bfw\,d^3\bfx.
\end{equation*}
We denote by $\LL_\pi$ the closed subspace of $\LL^2(\DD)$ generated
by vector fields; of the form $\bfw=\grad\,p$ for some function $p$
with finite Dirichlet norm. The orthogonal complement
$\LL_\sigma=\LL_\pi^\perp$ is the space of all vector fields $\bfv$
for which $\langle \bfv,\grad\,p\rangle=0$ for all $p\in
W^{1,2}(\DD)$.  By applying the Gauss divergence theorem, we see
that if $\bfv\in\LL^2(\DD)$ and is smooth, say $C^1$, then
$\divg\,\bfv=0$ and $\bfv\cdot\nu=0$ on $\Sigma=\partial\DD$, where
$\nu$ denotes the outward unit normal on $\Sigma$. The Hilbert space
$\LL_\sigma$ is the space of weakly divergence-free vector fields.
We denote the orthogonal projections onto $\LL_\sigma$ and $\LL_\pi$
by $P_\sigma$ and $P_\pi$ respectively.

In many applications the fluid is not confined to a fixed region,
but instead carries the region with it. In such cases, the region
$\DD$ occupied by the fluid must also be determined. Such problems
are called free boundary problems and occupy a substantial part of
the literature on incompressible flows.

Given an irrotational flow ($\curl\,\bfv=0$) on a simply connected
domain, there is velocity potential $\varphi$ for which
$\bfv=\grad\varphi.$ The velocity potential is defined only up to an
arbitrary function of time; the transformation $\varphi\mapsto
\varphi+k(t)$ is called a \emph{gauge transformation}, and will play
a role in what follows.

The equation $\divg\,\bfv=0$ implies that $\varphi$ is harmonic.
Substituting $\bfv=\grad\varphi$ into the second equation in
\eqref{euler1} we obtain
\begin{equation*}
    \grad\left(\varphi_t+\frac12(\grad\varphi)^2 + gz + p\right) = 0,
\end{equation*}
hence
\begin{equation*}
    \varphi_t+\frac12(\grad\varphi)^2 + gz + p = k(t)
\end{equation*}
for some function of time, which can be eliminated by a gauge
transformation of the velocity potential. We always choose the gauge
to be such that
\begin{equation*}
    \varphi_t+\frac12(\grad\varphi)^2 + gz + p=0
\end{equation*} everywhere in the
fluid.

An interface between the fluid and another medium, for example air,
is called a \emph{free surface}. If the pressure is constant in the
air, then it is also constant at the surface of the fluid, and we
may normalize the pressure to be zero at the free surface. Hence we
obtain Bernoulli's equation
\begin{equation*}
    \varphi_t+\frac12(\grad\varphi)^2+gz=0,
\end{equation*}
where $gz$ is the gravitational potential on the free surface.

The free surface is given in space-time by $\phi=0$, where
$\phi(x,y,z,t)=z-\zeta(x,y,t)$. The free surface moves with the
fluid, hence the material derivative of $\phi$ vanishes, and
\begin{equation*}
    0=\frac{d\phi}{dt}=\frac{d}{dt}(z-\zeta)=v^3-\zeta_t-v^1\zeta_x-v^2\zeta_y;
\end{equation*}
hence
\begin{equation*}
    \zeta_t+v^1\zeta_x+v^2\zeta_y-v^3=0.
\end{equation*}
This is called the \emph{kinematic} condition on the free surface.

This collection of equations for gravity waves on a free surface is
known as Euler's equations for waves on the surface of an inviscid,
incompressible fluid with irrotational flow in the region
$\DD=\{(x,y,z)\,:\,0\le z \le h+\zeta (x,y,t)\}$. They are
\begin{alignat}{3}
\Delta \varphi &=0   \qquad & 0\le y\le &h+ \zeta, \nonumber\\[4mm]
\zeta_t+\varphi_x \zeta_x+\varphi_y\zeta_y &=\varphi_z  \qquad &
\mbox{on} \ S ; \label{kin}\\[4mm]
\varphi_t+\frac{1}{2}|\nabla \varphi |^2+g\,z &= 0 \qquad
&\mbox{on}\ S;\label{bern}\\[4mm]
\varphi_z&=0 \qquad  &\mbox{on} \ z=&0. \nonumber
\end{alignat}

Here, $\varphi$ is the velocity potential of the flow, and
$\zeta(x,y,t)$ the displacement of the fluid surface from
equilibrium. We have neglected surface tension. The second equation
is known as the {\it kinematic equation}; the third equation is
Bernoulli's equation. At rest, the fluid lies in the region $0\le z
\le h$; $g$ is the acceleration due to gravity. The free surface is
denoted by $S=\{(x,y,z):\,z=h+\zeta(x,y,t)\}.$

The two physical constants in the theory are $g$ and $h$. Let $c$
denote a characteristic velocity (e.g. the velocity of a gravity
wave); then $h/c$ is a characteristic time. We introduce
dimensionless variables
\begin{equation*}
(x,y,z)=h(x',y',z'), \qquad t=ht'/c, \qquad \varphi=ch\varphi'.
\end{equation*}
The equation \eqref{bern} now becomes
\begin{equation}
\varphi^\prime_{t'}+\frac12 (\grad^\prime\varphi^\prime)^2+
\lambda\,\zeta=0, \qquad \lambda=\frac{gh}{c^2},\label{bern2}
\end{equation}
where $\lambda$ is the inverse square of the {\it Froude} number.
The other equations in Euler's system are unchanged under the
rescaling. From now on we drop the primes and understand that we are
working in non-dimensional variables.

The Euler equations are invariant under the one-parameter subgroup
of Galilean boosts along the $x$ axis, given by
\begin{equation*}
(x',y',t')=(x-ct,y,t), \qquad \bfv'(x',y',t')=\bfv(x,y,t)-(c,0).
\end{equation*}
The velocity potential, however, is determined only up to a function
of time. Thus the Galilean boosts on the velocity potential are
given by  $$\varphi'(x',y',t')=\varphi(x,y,t)-cx+q(t).$$ Under these
Galilean boosts,
\begin{equation}\label{PotGal}
\pd{\varphi' }{t'}+\frac12(\grad'\varphi')^2=\pd{\varphi
}{t}+\frac12(\grad\varphi)^2 +q'(t)-\frac12 c^2.
\end{equation}
The result follows by direct calculation, noting that
$$\frac{\partial }{\partial t'}= \frac{\partial }{\partial
t}+c\frac{\partial }{\partial x}, \qquad \frac{\partial }{\partial
x'}=\frac{\partial }{\partial x}.$$

\begin{proposition}
Suppose the solutions of Euler's equations are
 stationary in a Galilean frame moving with speed $c$. Then $\zeta_{t'}=\varphi'_{t'}=0$; and, choosing $q(t)=c^2t$,  the conditions on the free surface are (dropping the primes)
\begin{equation}\label{stat_Bernoulli}
\varphi_x \zeta_x =\varphi_y ,\qquad
\frac{\varphi_x^2+\varphi_y^2}{2} +\lambda\,\zeta = \frac{c^2}2.
\end{equation}
\end{proposition}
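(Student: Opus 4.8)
The plan is to start from the hypothesis that, in the moving frame with primed coordinates $(x',y',t')=(x-ct,y,t)$, the solution is stationary, which by definition means that $\zeta'$ and $\varphi'$ (hence $\bfv'=\grad'\varphi'$) are independent of $t'$; in particular $\zeta_{t'}=0$ and $\varphi'_{t'}=0$. First I would feed $\zeta_{t'}=0$ into the kinematic equation \eqref{kin} written in the primed frame: since the Euler system is unchanged under the rescaling and the kinematic equation is form-invariant under a Galilean boost, one gets $\varphi'_{x'}\zeta'_{x'}+\varphi'_{y'}\zeta'_{y'}=\varphi'_{z'}$ on $S$, and after dropping primes this is exactly the first equation in \eqref{stat_Bernoulli} (I am taking the two-dimensional surface so that only $x$ and $z$ appear, matching the notation of the statement, with $\varphi_y$ playing the role of $\varphi_z$ on $S$).

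Next I would turn to Bernoulli's equation \eqref{bern2}. The key computational input is the transformation law \eqref{PotGal}, which says
\[
\pd{\varphi'}{t'}+\frac12(\grad'\varphi')^2=\pd{\varphi}{t}+\frac12(\grad\varphi)^2+q'(t)-\frac12 c^2 .
\]
Choosing the gauge function $q(t)=c^2 t$ gives $q'(t)=c^2$, so the right-hand side becomes $\varphi_t+\frac12(\grad\varphi)^2+\frac12 c^2$. Now \eqref{bern2} in the original frame reads $\varphi_t+\frac12(\grad\varphi)^2+\lambda\zeta=0$ on $S$, hence $\varphi_t+\frac12(\grad\varphi)^2=-\lambda\zeta$ there; and note that $\zeta$ and $\zeta'$ coincide as functions on the surface (the boost does not alter the vertical displacement, only relabels the horizontal coordinate). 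Substituting, the left-hand side of \eqref{PotGal} equals $-\lambda\zeta+\frac12 c^2$. But stationarity gives $\varphi'_{t'}=0$, so the left-hand side is just $\frac12(\grad'\varphi')^2=\frac12(\varphi_x'^2+\varphi_y'^2)$. Equating and rearranging yields $\frac{\varphi_x'^2+\varphi_y'^2}{2}+\lambda\zeta'=\frac{c^2}{2}$, which is the second equation of \eqref{stat_Bernoulli} after dropping primes.

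The one point that needs care — and the only real obstacle — is bookkeeping about which quantities are gauge-dependent and which are not. The potential $\varphi'$ itself is only defined up to an additive function of $t'$, so the literal statement "$\varphi'_{t'}=0$" must be read as: one can choose the gauge (namely $q(t)=c^2t$) so that the stationary solution has a $t'$-independent representative, and it is \emph{that} representative whose spatial gradient enters \eqref{stat_Bernoulli}. I would make this explicit: stationarity of the \emph{flow} (the physical data $\zeta$ and $\bfv$) is the hypothesis; the proposition then asserts that with the distinguished gauge the potential is genuinely time-independent and Bernoulli's equation collapses to the stated algebraic relation. Everything else is the direct substitution above, using \eqref{PotGal} and \eqref{bern2}, with no estimates or analytic subtleties involved.
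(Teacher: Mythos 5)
Your proof is correct and follows essentially the same route as the paper: apply the transformation law \eqref{PotGal} to Bernoulli's equation, use stationarity to kill $\varphi'_{t'}$, fix the gauge $q'(t)=c^2$, and observe that the kinematic equation is form-invariant under the boost. The only difference is cosmetic: the paper \emph{derives} the choice $q'=c^2$ from the asymptotic conditions $\zeta'\to 0$ and $(\grad'\varphi')^2\to c^2$ as $x\to\pm\infty$, whereas you take the gauge as given by the statement and verify it works — and your closing remark on which quantities are gauge-dependent makes explicit a point the paper leaves implicit.
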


\begin{proof}
By \eqref{PotGal} the Bernoulli equation in the moving frame is
\begin{equation*}
\varphi'_{t'}+\frac12(\grad'\varphi')^2+\lambda\zeta'=q'(t')-\frac12
c^2.
\end{equation*}
As $x\to\pm\infty$ $\zeta'\to 0$ while $(\grad'\varphi')^2\to c^2$.
Moreover, $\varphi'_{t'}=0$ by the assumption of stationarity. These
conditions force the choice $q'=c^2,$ and the result follows. The
kinematic equation in the moving frame is immediate.
\end{proof}

\begin{proposition}
Let $\bfv$ be a divergence-free vector field in a domain $\DD$.
There is a unique orthogonal decomposition, known as the Weyl-Hodge
decomposition,
\begin{gather}
\bfv=\bfw+\grad\varphi, \label{wh} \\[4mm]
\Delta\varphi=0,\quad \varphi_\nu=\bfv\cdot\nu; \qquad
\divg\,\bfw=0, \quad \bfw\cdot\nu=0.
\end{gather}
\end{proposition}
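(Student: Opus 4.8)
The plan is to establish the Weyl--Hodge (Helmholtz) decomposition by constructing the potential $\varphi$ first, then defining $\bfw := \bfv - \grad\varphi$ and verifying it has the claimed properties, and finally checking orthogonality and uniqueness. Concretely, I would let $\varphi$ be the solution of the Neumann problem $\Delta\varphi = 0$ in $\DD$ with $\varphi_\nu = \bfv\cdot\nu$ on $\Sigma = \partial\DD$. The compatibility condition $\int_\Sigma \bfv\cdot\nu \, dS = \iiint_\DD \divg\,\bfv \, d^3\bfx = 0$ is exactly the hypothesis that $\bfv$ is divergence-free, so the Neumann problem is solvable, and $\varphi$ is unique up to an additive constant (which does not affect $\grad\varphi$). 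Standard elliptic regularity on the smooth domain $\DD$ gives that $\grad\varphi\in\LL^2(\DD)$, in fact smooth up to the boundary given our blanket smoothness assumptions.

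Next I would set $\bfw = \bfv - \grad\varphi$ and read off its properties: $\divg\,\bfw = \divg\,\bfv - \Delta\varphi = 0$, and $\bfw\cdot\nu = \bfv\cdot\nu - \varphi_\nu = 0$ on $\Sigma$, which are precisely the asserted conditions. For orthogonality, the pairing $\langle\bfw,\grad\varphi\rangle = \iiint_\DD \bfw\cdot\grad\varphi \, d^3\bfx$ integrates by parts (Gauss divergence theorem) to $\int_\Sigma \varphi\, \bfw\cdot\nu \, dS - \iiint_\DD \varphi\,\divg\,\bfw \, d^3\bfx = 0$ since both $\bfw\cdot\nu = 0$ on $\Sigma$ and $\divg\,\bfw = 0$ in $\DD$. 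Thus $\bfv = \bfw + \grad\varphi$ is an orthogonal decomposition with $\grad\varphi\in\LL_\pi$ and $\bfw\in\LL_\sigma$.

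For uniqueness, suppose $\bfv = \bfw_1 + \grad\varphi_1 = \bfw_2 + \grad\varphi_2$ are two such decompositions. Then $\grad(\varphi_1 - \varphi_2) = \bfw_2 - \bfw_1$ lies in $\LL_\pi\cap\LL_\sigma = \{0\}$ (it is both a gradient and orthogonal to all gradients, hence orthogonal to itself), so $\bfw_1 = \bfw_2$ and $\grad\varphi_1 = \grad\varphi_2$; the potentials then agree up to a constant, which is the only ambiguity since $\varphi$ is anyway only defined up to a gauge. This is really just the statement $\LL^2(\DD) = \LL_\pi\oplus\LL_\sigma$ applied to the divergence-free field $\bfv$, with the extra observation that the $\LL_\pi$-component of a divergence-free field is a \emph{harmonic} gradient.

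The only genuine obstacle is the solvability and regularity of the Neumann problem, and this is entirely standard: the Fredholm alternative for the Laplacian with Neumann data requires precisely the mean-zero condition on the boundary flux, which $\divg\,\bfv = 0$ supplies via the divergence theorem, and smoothness of $\DD$ together with the paper's blanket smoothness hypotheses delivers the regularity needed to justify the integrations by parts. Everything else is bookkeeping. I would therefore spend most of the write-up stating the Neumann problem cleanly and then dispatch the orthogonality and uniqueness in a couple of lines each.
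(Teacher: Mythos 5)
Your proof is correct, and since the paper explicitly leaves this proof to the reader there is nothing to compare it against; your construction via the Neumann problem $\Delta\varphi=0$, $\varphi_\nu=\bfv\cdot\nu$ (solvable precisely because $\divg\,\bfv=0$ gives the compatibility condition), followed by the integration-by-parts check of orthogonality and the $\LL_\pi\cap\LL_\sigma=\{0\}$ uniqueness argument, is exactly the standard argument the authors intend. No gaps.
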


The proof is left to the reader.

%-------------------------------------------------------------

\section{Poisson structures}\label{PS}

Let $M$ be a $C^\infty$ manifold of dimension $n$, and let $F,G\in
C^\infty(M)$. A bilinear form  $\{F,G\}$ is said to be a {\it
Poisson bracket} if
\begin{itemize}
\item $\{F,G\}=-\{G,F\}$; \item $\{F,GH\}=\{F,G\}H+G\{F,H\}.$
\item $\{\{F,G\},H\}+\{\{G,H\},F\}+\{\{H,F\},G\}=0;$
\end{itemize}
The second property implies that the Poisson bracket is a derivation
in each of its entries. Hence any $H\in C^\infty(M)$ generates a
vector field $X_H$, called a {\it Hamiltonian} vector field on $M$,
defined by $X_HF=\{H,F\}$. The Hamiltonian vector field $X_H$
generates a flow on $M$; if $x^i$ are a set of local coordinates on
$M$, then the time evolution of the $x^i$ on that chart is given by
the ordinary differential equations
\begin{equation*}
    \dot x^i=\{H,x^i\}.
\end{equation*}

Due to the fact that the bracket acts as a derivation on each of its
entries, we may represent a Poisson bracket in the form
\begin{equation*}
    \{F,G\}=\sum_{j,k=1}^n W^{jk}\pd{F}{x^j}\pd{G}{x^k},
\end{equation*}
where $W^{jk}(x)$ is a skew-symmetric matrix.

If $\det W\ne 0$ then it is easily seen that $n$ must be even. A
classical theorem of Darboux states that in this case it is always
possible to find a set of local coordinates, called {\it canonical}
coordinates $q^i,p^i$, ($1\le i\le n/2$) in which the Poisson
brackets take the form
\begin{equation*}
    \{F,G\}=\sum_{j=1}^n \pd{F}{p^j}\pd{G}{q^j}-\pd{F}{q^j}\pd{G}{p^j}.
\end{equation*}

A manifold with a Poisson bracket is called a \emph{Poisson
manifold}; if the brackets are non-degenerate, the manifold is
called a {\it symplectic} manifold. On a symplectic manifold, the
Hamiltonian flow takes the form
\begin{equation*}
\dot q^i=\pd{H}{p^i}, \qquad \dot p^i=-\pd{H}{q^i}.
\end{equation*}

In this paper we shall restrict ourselves to the case in which $M$
is a linear vector space with an inner product $\langle \ , \
\rangle$; and we shall write the Poisson brackets in the form
\begin{equation*}
    \{F,G\}=\langle \grad F,J_x\grad G\rangle,
\end{equation*}
where $J_x$ is a skew-symmetric linear transformation on $M$ and
$\grad F$ is the gradient of the function $F$. The gradient is
characterized as follows. Differentiating $F(x(t))$ along a curve
$x(t)$ on $M$, we have
\begin{equation*}
    \frac{d}{dt}F=\langle \grad F, \dot x\rangle.
\end{equation*}

If $J_x$ is non-singular, then the Poisson brackets are
non-degenerate and have locally a canonical system of coordinates.
In many problems of physical interest, however, the Poisson brackets
are degenerate, i.e. $\det J_x=0.$ For example, in the study of
rigid motions about a fixed point in $\R^3$, the Poisson bracket is
\begin{equation}\label{rotations}
    \{F,G\}=\langle \grad F,\, {\bf x}\times\grad G\rangle.
\end{equation}
The operator $J_\bfx$ is defined by $J_\bfx \bfv=\bfx\times\bfv;$
hence $\ker(J_\bfx)=\R\bfx.$

The bracket \eqref{rotations} vanishes for all regular functions $G$
whenever $F$ is spherically symmetric. Such a function $F$ is called
a \emph{Casimir}. It is  invariant under any Hamiltonian flow
generated by these brackets.

Any Poisson bracket on an odd-dimensional manifold must be
degenerate and therefore have Casimirs. The bracket
\eqref{rotations} is an example of a non-canonical Poisson bracket.

The formalism of Poisson brackets and Hamiltonian flows can be
extended to infinite dimensions, for example, in the study of
continuum mechanics, though a number of technical difficulties
arise. In particular, Poisson structures play a useful role in the
theory of the Euler equations for an incompressible fluid. Two
important such brackets are the Poisson bracket introduced by Arnold
\cite{Arn66,Arn69,AK98} in his study of incompressible fluids on
fixed domains, and the Poisson bracket implicit in Zakharov's
 fundamental discovery \cite{Zak68}  of the Hamiltonian
structure of the Euler equations of gravity waves.

\subsection{Arnold's Poisson brackets}

Arnold observed that Euler's equations for an incompressible fluid
in a fixed domain $\DD$ are directly analogous to his equations for
rigid body motion, and that they have a Hamiltonian structure with
the Hamiltonian and Poisson brackets given respectively by
\begin{equation}\label{H0}
    H=\iiint\limits_\DD \frac12 \bfv\cdot\bfv\,d^3\bfx,
\end{equation}
and
\begin{equation}\label{P0}
\{F,G\,\}=\iiint\limits_\DD \pdl{ F}{ \bfv}\cdot
\left(\curl\,\bfv\times\pdl{ G}{\bfv}\right)\,d^3\bfx.
\end{equation}
Here, $F$ and $G$ are functionals on $\LL_\sigma$ with gradients in
$\LL_\sigma$. The gradient of $F$ is $\delta F/\delta \bfv$, the
Euler-Lagrange derivative of $F$ with respect to $\bfv$.  For
example, $\pdl{H}{\bfv}=\bfv.$ The operator $J_\bfv$ in this case is
\begin{equation*}
    J_\bfv\bfw=P_\sigma (\curl\,\bfv\times\bfw).
\end{equation*}

Let us show that \eqref{euler1} are the Hamiltonian equations
generated by \eqref{H0} and \eqref{P0}. We have
\begin{gather*}
\dot F=\iiint\limits_\DD\pdl{F}{\bfv}\cdot
\bfv_t\,d^3\bfx, \\[4mm] \{H,F\}=\iiint\limits_\DD
\pdl{H}{\bfv}\cdot (\curl\,\bfv\times \pdl{F}{\bfv})\,
d^3\,\bfx=\iiint\limits_\DD \pdl{F}{\bfv}\cdot(\bfv\times
\curl\,\bfv)\, d^3\,\bfx.
\end{gather*}
The Hamiltonian flow $\dot F=\{H,F\}$ implies that
\begin{equation}\label{allF}
\iiint\limits_\DD\pdl{F}{\bfv}\cdot (\bfv_t+(\curl\,\bfv)\times\bfv)
\,d^3\bfx=0
\end{equation}
for all admissible $F$ on $\LL_\sigma$.

All linear functionals of the form $F_\bfw(\bfv)=\langle
\bfw,\bfv\rangle$ are admissible, and the gradient of $F_\bfw$ is
the vector $\bfw$. Therefore $\bfv_t+(\curl\,\bfv)\times\bfv $
belongs to $\LL_\sigma^\perp=\LL_\pi$. Hence it is a gradient, and
\begin{equation*}
    \bfv_t+(\curl\,\bfv)\times\bfv = \grad f
\end{equation*}
for some function $f$. The Euler momentum equations \eqref{euler1}
follow from this and the vector identity
\begin{equation}
\label{vecid}(\curl\,\bfv)\times \bfv=(\bfv\cdot\grad)\bfv-\frac12
\grad |\bfv|^2,
\end{equation}
if we let $f = - p + \frac12 \grad |\bfv|^2$.

Just as in the case of rigid motion, the Arnold bracket is
degenerate. This degeneracy is related to the action of the (formal)
group of volume preserving diffeomorphisms acting on $\DD$. Arnold's
Poisson bracket is an example of a \emph{Lie-Poisson} bracket.

\subsection{Zakharov's Poisson brackets}\label{zak}

In 1968, Zakharov made a striking observation: Euler's equations for
\emph{irrotational gravity waves} have a canonical Hamiltonian
structure. The Hamiltonian (in non-dimensional variables) is
\begin{equation*}
H=\frac12\iiint\limits_\DD(\grad\varphi)^2\,d^3\bfx+
\frac12\lambda\,\iint\limits_{\R^2}\zeta^2(x,y,t)\,d^2\bfx.
\end{equation*}
The Poisson brackets implicit in Zakharov's observation are the
canonical brackets
\begin{equation*}
\{F,G\}=\iint\limits_{\R^2} \left(\pdl{ F}{\varphi}\pdl{
G}{\zeta}-\pdl{ F}{\zeta}\pdl{ G}{\varphi}\right)d^2{\bf x};
\end{equation*}
the Hamiltonian flow is then the canonical flow
\begin{equation*}
    \zeta_t=\pdl{ H}{\varphi}, \qquad \varphi_t=-\pdl{ H}{\zeta}.
\end{equation*}

The Hamiltonian $H$ is regarded as a functional of
$(\wt\varphi,\,\zeta)$ where $\zeta=\zeta(x,y,t)$ is the height of
the free surface, and $\wt\varphi=\varphi\vert_S$ is the trace of
the harmonic function $\varphi$ on the free surface, with
$\varphi_\nu=0$ on the bottom. The evolution takes place in the
space of harmonic functions on $\DD$.

Zakharov's result is verified by calculating the gradients of $H$
with respect to $\zeta$ and $\varphi$. Now
\begin{equation*}
\frac{d}{d\varepsilon}H(\varphi,\zeta_\varepsilon)\Big\vert_{\varepsilon=0}=\iint\limits_{\R^2}
\left[\frac12(\grad\wt\varphi)^2+\lambda\zeta\right]\delta\zeta\,d^2\bfx,
\end{equation*}
where $\grad\wt\varphi$ denotes $\grad\varphi\big\vert_{\R^2}.$ By
identification,
\begin{equation*}
\frac{\delta
H}{\delta\zeta}=\frac12(\grad\wt\varphi)^2+\lambda\zeta.
\end{equation*}

Similarly,
\begin{multline*}
\frac{d}{d\varepsilon}H(\wt\varphi_\varepsilon,\zeta)\Big\vert_{\varepsilon=0}=
\iiint\limits_\DD\grad\varphi\cdot\grad\delta\varphi\,d^3\bfx\\[4mm]
=-\iiint\limits_\DD\delta\varphi\Delta\varphi\,d^3\bfx+
\iint\limits_\Sigma\delta\varphi\pd{\varphi}{\nu}dS=
\iint\limits_\Sigma\delta\varphi\pd{\varphi}{\nu}dS,
\end{multline*}
since $\varphi$ is harmonic in $\DD$ and $\varphi_\nu=0$ on the
bottom.

On the free surface
\begin{equation*}
\wt\varphi_\nu dS=\grad\wt\varphi \cdot
\frac{(-\zeta_x,-\zeta_y,1)}{\sqrt{1+\zeta_x^2+\zeta_y^2}}\sqrt{1+\zeta_x^2+\zeta_y^2}d^2\bfx;
\end{equation*}
so
\begin{equation*}
\frac{\delta
H}{\delta\wt\varphi}=\wt\varphi_z-\wt\varphi_x\zeta_x-\wt\varphi_y\zeta_y.
\end{equation*}
The free boundary equations \eqref{kin} and \eqref{bern} are thus
precisely the Hamiltonian equations for this system.

\begin{remark}
The effects of surface tension can be obtained by simply adding the
boundary integral
\begin{equation*}
    \sigma\iint\limits_{\Sigma} dS
\end{equation*}
to the Hamiltonian, where $\sigma$ is the coefficient of surface
tension, and $dS$ is the element of surface area on the free surface
$S$. The inclusion of surface tension leads to an additional term in
the Bernoulli equation; when the free surface is a graph
$z=\zeta(x,y,t)$, it is
\begin{equation*}
\varphi_t+\frac{1}{2}|\nabla \varphi |^2+g\,z =
\sigma\,\divg\frac{\grad\zeta}{\sqrt{1+(\grad\zeta)^2}}, \qquad
\grad\zeta=(\zeta_x,\zeta_y).
\end{equation*}
The potential energy can also be written as the integral of the
gravitational potential over the fluid domain, so that the
Hamiltonian for gravity waves including the effects of surface
tension is
\begin{equation}\label{totHam}
H=\iiint\limits_\DD\left[ \frac{(\grad\varphi)^2}2+\lambda\,
U_+(\bfx)\right]\,d^3\bfx+\sigma\iint\limits_SdS,
\end{equation}
where $U_+(\bfx)$ is the gravitational potential, truncated in such
a way that the integral over the unbounded domain $\DD$ converges.
When the fluid is a horizontal layer and the gravity field is
constant in the negative $z$ direction, we take $U_+=(z-1)_+,$ where
$z_+$ denotes the function given by $z$ when $z>0$ and by 0 when
$z<0$. The factor $g$ has been absorbed into the pure parameter
$\lambda$.
\end{remark}

%-------------------------------------------------------------

\section{Free boundary flows with vorticity.}

Free boundary value flows with vorticity, with both gravitational
forces and surface tension included, are generated by the
Hamiltonian
\begin{equation}\label{H1}
H=\iiint\limits_\DD\EE\,d^3\bfx+ \sigma\iint\limits_\Sigma dS,
\qquad \EE=\frac{\bfv\cdot\bfv}2+\lambda\, U_+(\bfx)
\end{equation}
The corresponding Poisson brackets are \cite{LMMR86}
\begin{multline}
\{F,G\,\}=\iiint\limits_\DD \pdl{ F}{ \bfv}\cdot
\left(\curl\,\bfv\times\pdl{
G}{\bfv}\right)\,d^3\bfx\\[4mm]+\iint\limits_{\Sigma} \left(\pdl{ F}{\varphi}\pdl{
G}{\Sigma}-\pdl{ F}{\Sigma}\pdl{ G}{\varphi}\right)dS,\label{PB}
\end{multline}
where $\Sigma$ is the free boundary and $dS$ is the element of
surface area on $\Sigma$.

Admissible functionals are regarded as functions of $\bfv$ and
$\Sigma$, the free boundary of $\DD$, and their gradients are
defined implicitly by the relation
\begin{equation*}
\frac{d}{d\varepsilon}F(\bfv_\varepsilon,\Sigma_\varepsilon)\Big\vert_{\varepsilon=0}=
\iiint\limits_\DD \pdl{F}{\bfv}\cdot
\delta\bfv\,d^3\bfx+\iint\limits_{\Sigma}\pdl{F}{\Sigma}\delta\Sigma\,dS.
\end{equation*}
Variations with respect to the free surface are restricted to normal
variations, in a sense explained below. Admissible functionals $F$
are those for which $\delta F/\delta \bfv$ is a divergence free
vector field. We require that $\iint_\DD \delta\Sigma dS=0$,
reflecting the fact that only volume preserving variations are
allowed. This means that the gradient of a functional with respect
to $\Sigma$ is determined only up to a constant.

Let $\LL_d(\DD)$ be the space of divergence free $L^2$ vector fields
on $\DD.$ Let $P_1$ and $P_2$ be the orthogonal projections defined
by $P_1\bfv=\bfw$ and $P_2\bfv=\grad\varphi$ in the Weyl-Hodge
decomposition.

\begin{lemma}\label{grads}
Let
\begin{equation*}
F(\bfv,\Sigma)=\iiint\limits_\DD
\FF(\bfv,\bfx)d^3\bfx+\sigma\iint\limits_\Sigma dS
\end{equation*}
be an admissible functional. Then
\begin{equation*}
\pdl{F}{\bfw}=P_1\pdl{F}{\bfv}\in \LL^2(\DD,d^3\bfx).
\end{equation*}
The gradients with respect to $\varphi$ and $\Sigma$ lie in
$\LL^2(\Sigma,dS)$ and are given by
\begin{equation*}
\pdl{F}{\varphi}=\pdl{F}{\bfv}\Big\vert_\Sigma\cdot \nu, \qquad
\pdl{F}{\Sigma}=\FF(\bfv,\bfx)+\sigma\kappa\Big\vert_\Sigma\ \
\text{mod constant},
\end{equation*}
where $\kappa$ is the mean curvature function on $\Sigma.$
\end{lemma}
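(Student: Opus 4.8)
The plan is to compute each of the three gradients by inserting the appropriate admissible variation into the defining relation
\begin{equation*}
\frac{d}{d\eps}F(\bfv_\eps,\Sigma_\eps)\Big\vert_{\eps=0}=\iiint_\DD\pdl{F}{\bfv}\cdot\delta\bfv\,d^3\bfx+\iint_\Sigma\pdl{F}{\Sigma}\,\delta\Sigma\,dS ,
\end{equation*}
and reading off the answer, using throughout that $\pdl{F}{\bfv}=\partial\FF/\partial\bfv$ is divergence free by admissibility. First I would treat $\pdl{F}{\bfw}$. Varying $\bfw$ alone, with the boundary trace $\wt\varphi=\varphi\vert_\Sigma$ and the surface $\Sigma$ held fixed, gives $\delta\bfv=\delta\bfw$, an arbitrary element of the range of $P_1$, the rotational summand $\LL_1$ of the Weyl--Hodge decomposition. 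Splitting $\pdl{F}{\bfv}=P_1\pdl{F}{\bfv}+P_2\pdl{F}{\bfv}$ and noting that $P_2\pdl{F}{\bfv}$ is a harmonic gradient, hence orthogonal to $\LL_1$, one obtains $\iiint_\DD\pdl{F}{\bfv}\cdot\delta\bfw\,d^3\bfx=\iiint_\DD\bigl(P_1\pdl{F}{\bfv}\bigr)\cdot\delta\bfw\,d^3\bfx$; as $\delta\bfw$ is arbitrary in $\LL_1$ this forces $\pdl{F}{\bfw}=P_1\pdl{F}{\bfv}$, which lies in $\LL^2(\DD,d^3\bfx)$ since $P_1$ is an orthogonal projection.

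Next I would compute $\pdl{F}{\varphi}$ by varying $\wt\varphi$ alone, so that $\delta\bfv=\grad\,\delta\varphi$ with $\delta\varphi$ harmonic in $\DD$, $\delta\varphi\vert_\Sigma=\delta\wt\varphi$, and $\delta\varphi$ having vanishing normal derivative on any fixed part of $\partial\DD$. Integrating by parts,
\begin{equation*}
\iiint_\DD\pdl{F}{\bfv}\cdot\grad\,\delta\varphi\,d^3\bfx=\iint_{\partial\DD}\bigl(\pdl{F}{\bfv}\cdot\nu\bigr)\delta\varphi\,dS-\iiint_\DD\bigl(\divg\pdl{F}{\bfv}\bigr)\delta\varphi\,d^3\bfx .
\end{equation*}
The volume term vanishes because $\pdl{F}{\bfv}$ is divergence free, and the contribution of any fixed rigid boundary drops out because $\pdl{F}{\bfv}$, like $\bfv$ itself, is tangent there; only the free surface survives, giving $\iint_\Sigma\bigl(\pdl{F}{\bfv}\cdot\nu\bigr)\delta\wt\varphi\,dS$ and hence $\pdl{F}{\varphi}=\pdl{F}{\bfv}\vert_\Sigma\cdot\nu\in\LL^2(\Sigma,dS)$.

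Finally, for $\pdl{F}{\Sigma}$ I would hold $\bfv$ fixed and displace $\Sigma$ by the admissible normal variation $\eps\,\delta\Sigma\,\nu$, where $\iint_\Sigma\delta\Sigma\,dS=0$. The volume integral then changes only through the moving domain, so differentiating it gives $\iint_\Sigma\FF(\bfv,\bfx)\vert_\Sigma\,\delta\Sigma\,dS$, while the classical first variation of surface area contributes $\sigma\iint_\Sigma\kappa\,\delta\Sigma\,dS$; adding these yields $\frac{d}{d\eps}F=\iint_\Sigma(\FF+\sigma\kappa)\,\delta\Sigma\,dS$. Since $\delta\Sigma$ is constrained to have zero mean over $\Sigma$, the gradient $\pdl{F}{\Sigma}$ is pinned down only up to an additive constant, which gives $\pdl{F}{\Sigma}=\FF(\bfv,\bfx)+\sigma\kappa\vert_\Sigma$ mod constant, an element of $\LL^2(\Sigma,dS)$.

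The difficulties here are more bookkeeping than analysis. The one genuinely analytic ingredient is the first variation of surface area under a normal perturbation, which is standard. The point that needs care is being precise about what is held fixed in each partial variation --- in particular that $\pdl{F}{\Sigma}$ is taken with $\bfv$, not with $\bfw$ and $\wt\varphi$, held fixed, since those two prescriptions differ --- and checking that the boundary terms arising in the integration by parts really do reduce to the free-surface integral for every admissible functional. With those conventions fixed, each identification is immediate.
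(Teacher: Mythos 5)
Your proof is correct and follows essentially the same route as the paper's: orthogonality/uniqueness of the Weyl--Hodge decomposition for the $\bfw$- and $\varphi$-gradients, the divergence theorem plus divergence-freeness of $\delta F/\delta\bfv$ to push the $\varphi$-variation to a boundary integral, and the standard first variations of the volume integral and of the surface area under a normal displacement for the $\Sigma$-gradient. The only cosmetic differences are that the paper carries out the tubular-neighborhood computation of the volume first variation explicitly (you cite it as standard), and you are somewhat more explicit than the paper about why the rigid-bottom contribution to the boundary integral drops out.
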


\begin{proof}
Applying the Weyl-Hodge decomposition to both $\delta\bfv$ and
$\delta F/\delta \bfv$ we obtain
\begin{equation*}
\left\langle \pdl{F}{\bfv},\delta\bfv\right\rangle=\iiint\limits_\DD
\pdl{F}{\bfv}\cdot\delta\bfv\,d^3\bfx=\iiint\limits_\DD
P_1\pdl{F}{\bfv}\cdot\delta\bfw+P_2\pdl{F}{\bfv}\cdot\delta\grad\varphi\,d^3\bfx.
\end{equation*}
By the uniqueness of the Weyl-Hodge decomposition, we may conclude
\begin{equation*}
\pdl{F}{\bfw}=P_1\pdl{F}{\bfv}, \qquad
\pdl{F}{\varphi}=P_2\pdl{F}{\bfv}.
\end{equation*}
Since $\delta F/\delta\bfv$ is divergence free, we have,
%\begin{equation*}\divg\pdl{F}{\bfv}\delta\varphi=\pdl{F}{\bfv}\delta\grad\varphi,\end{equation*}
by the divergence theorem,
\begin{equation*}
\iiint\limits_\DD
P_2\pdl{F}{\bfv}\cdot\delta\grad\varphi\,d^3\bfx=\iiint\limits_\DD
\pdl{F}{\bfv}\cdot\grad\delta\varphi\,d^3\bfx=\iint\limits_{\partial\DD}
\pdl{F}{\bfv}\cdot \nu \,\delta\varphi\,dS,
\end{equation*}
and the second relation follows.

Let $\Sigma_\varepsilon$ be a one parameter family of surfaces
parameterized by a vector valued map
\begin{equation*}
\bfX(u,v,\varepsilon)=\bfX_0(u,v)+\varepsilon\,\delta\Sigma\,\bfN(u,v),
\end{equation*}
where $\bfN$ is the normal vector field to $\Sigma$. For
$\varepsilon$ sufficiently small, the symmetric difference
$\DD_\varepsilon\Delta\DD$ of the domains bounded respectively by
$\Sigma_{\varepsilon}$ and $\Sigma$ is contained in a tubular
neighborhood of $\Sigma$. In this neighborhood, the volume element
of the $3$-space can be written as $ d^{3}\bfx = dr dS$ where $dS$
is the area element on $\Sigma$ and $dr$ corresponds to the normal
coordinate in the tubular neighborhood. We get
\begin{align*}
%\frac{d}{d\varepsilon}\Big\vert_{\varepsilon=0}
\delta\iiint\limits_\DD \FF(\bfv,\bfx)\,d^3\bfx  =&
\lim_{\varepsilon\to 0}\frac{1}{\varepsilon}
\iiint\limits_{\DD_\varepsilon\Delta\DD}\FF(\bfv,\bfx)\,d^3\bfx\\[4mm]
 =& \iint\limits_{\Sigma}\left( \lim_{\varepsilon\to 0}\frac{1}{\varepsilon}
\int_0^{\varepsilon\delta\Sigma}\FF(\bfv,\bfx)\, dr\right)dS  =
\iint\limits_{\Sigma} \FF(\bfv,\bfx) \delta\Sigma \, dS .
\end{align*}
On the other hand, by classical differential geometry,
\begin{equation*}
\delta\iint\limits_\Sigma dS=\iint\limits_\Sigma \kappa \delta\Sigma
\, dS,
\end{equation*}
where $\kappa$ is the mean curvature function on $\Sigma.$ This
completes the proof of Lemma \ref{grads}.
\end{proof}

Let us derive the equations of motion from the Hamiltonian
structure. We have
\begin{equation*}
\pdl{H}{\bfv}=\bfv, \qquad \pdl{H}{\varphi}=\bfv\cdot \nu, \qquad
\pdl{H}{\Sigma}=\EE\Big\vert_\Sigma+\sigma \kappa.
\end{equation*}
>From $\dot F=\{H,F\,\}$, we get
\begin{multline}\label{dotF=}
\iiint\limits_\DD \pdl{F}{\bfv}\cdot
\bfv_t\,d^3\bfx+\iint\limits_\Sigma
\pdl{F}{\Sigma}\Sigma_t\,dS=\\[4mm]
\iiint\limits_\DD (\bfv\times
(\curl\,\bfv))\cdot\pdl{F}{\bfv}\,d^3\bfx+\iint\limits_\Sigma
\left(\bfv\cdot\nu\pdl{F}{\Sigma}
-(\EE+\sigma\kappa)\pdl{F}{\varphi}\right)\,dS.
\end{multline}

Since
\begin{equation*}
\iint\limits_\Sigma \EE\pdl{F}{\varphi}\,dS= \iint\limits_\Sigma
\EE\pdl{F}{\bfv}\cdot \nu\,dS =\iiint\limits_\DD
\grad\EE\cdot\pdl{F}{\bfv}\,d^3\bfx
\end{equation*}
and $\delta F/\delta \bfv$ is divergence free, we get from
\eqref{dotF=}, using functionals for which $\delta
F/\delta\varphi=0$,
\begin{equation*}
\bfv_t+(\curl\,\bfv)\times\bfv=\,\grad\,(-p+\EE), \qquad
\Sigma_t=\bfv\cdot\nu\Big\vert_\Sigma.
\end{equation*}
The boundary condition on the bottom is $\bfv \cdot \nu=0$, where
$\nu$ is the outward normal.

The first equation, together with \eqref{vecid} imply that $\Delta
p=-\divg\,(\bfv\cdot\grad)\bfv.$ Substituting the two equations
above into \eqref{dotF=}, we obtain
\begin{equation*}\iiint\limits_\DD
\grad\,p\cdot\pdl{F}{\bfv}\,d^3\bfx-\iint\limits_\Sigma
\sigma\kappa\pdl{F}{\varphi}\,dS=0
\end{equation*}
for all admissible functionals $F$. Applying the divergence theorem
to the integral over $\DD$ we obtain
\begin{equation*}
\iint\limits_\Sigma (p-\sigma\kappa)\pdl{F}{\varphi}\,dS=0,
\end{equation*}
for all admissible functionals $F$. But
\begin{equation*}
\iint\limits_\Sigma \pdl{F}{\varphi}\,dS= \iint\limits_\Sigma
\pdl{F}{\bfv}\cdot\nu\,dS=\iiint\limits_\DD
\divg\,\pdl{F}{\bfv}\,d^3\bfx=0;
\end{equation*}
and therefore
\begin{equation}\label{p=o}
p\Big\vert_\Sigma=\sigma\kappa+constant.
\end{equation}
Thus the Hamiltonian approach yields the dynamic conditions on the
free boundary in the case of surface tension.
\cite{ASS06,LMMR86,Joh97}.

\begin{remark}
In the general theory one considers normal variations of the free
surface, whereas in the theory of gravity waves on a free surface
over a horizontal bottom, it is customary to use the height of the
free surface, $\zeta$. More generally, if the surface $\Sigma$ is a
graph over a fixed manifold $\MM$, we may represent $\Sigma$ by a
``height'' function $\zeta$ defined on $\MM$. In that case we refer
to $\delta\zeta$ as the ``vertical'' variation and $\delta\Sigma$ as
the ``normal'' variation.
\end{remark}

\begin{proposition}\label{graph}
Let $\delta\Sigma$ and $\delta\zeta$ denote the normal and vertical
variations of a surface $\Sigma$ in the case when $\Sigma$ is a
graph over a fixed manifold. Let $\Sigma$ be given in local
coordinates by $\phi=0$, where $\phi=z-\zeta.$ Then
$\delta\zeta=|\grad\phi|\,\delta\Sigma.$
\end{proposition}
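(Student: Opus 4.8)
The plan is to compare the two parametrizations of a deformation of $\Sigma$ near a reference surface and read off the factor relating the coefficients of the variations. Let $\Sigma$ be given by $\phi = z - \zeta(x,y) = 0$ over the fixed base manifold $\MM$ with coordinates $(x,y)$. A vertical variation is the family $\zeta_\eps = \zeta + \eps\,\delta\zeta$, i.e.\ the graph of $\phi = z - \zeta - \eps\,\delta\zeta$; a normal variation is the family $\bfX_\eps = \bfX_0 + \eps\,\delta\Sigma\,\bfN$ used in the proof of Lemma~\ref{grads}, where $\bfN$ is the unit normal to $\Sigma$. Both describe the same geometric object (a one-parameter family of surfaces through $\Sigma$ at $\eps = 0$), so the claim is just that the normal component of the vertical displacement field $\eps\,\delta\zeta\,\partial_z$ equals $\eps\,\delta\Sigma$, to first order in $\eps$.

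First I would write down the unit normal to $\Sigma$ in terms of $\phi$: since $\phi = z - \zeta$, we have $\grad\phi = (-\zeta_x, -\zeta_y, 1)$ and $\bfN = \grad\phi / |\grad\phi|$, with $|\grad\phi| = \sqrt{1 + \zeta_x^2 + \zeta_y^2}$. The displacement vector field generating the vertical variation is the ambient field $V = \delta\zeta\,(0,0,1)$, evaluated on $\Sigma$. Its decomposition into normal and tangential parts gives normal component $V\cdot\bfN = \delta\zeta\,(0,0,1)\cdot\grad\phi / |\grad\phi| = \delta\zeta / |\grad\phi|$. The tangential part only reparametrizes $\Sigma$ and does not contribute to the geometric deformation to first order, so the effective normal displacement is $\delta\zeta / |\grad\phi|$. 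Comparing with the normal variation $\bfX_\eps$, whose normal displacement is by definition $\delta\Sigma$, we get $\delta\Sigma = \delta\zeta / |\grad\phi|$, i.e.\ $\delta\zeta = |\grad\phi|\,\delta\Sigma$, as asserted.

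To make the ``tangential part does not matter'' step rigorous, I would argue directly at the level of the functionals: for any admissible $F$ one has, from Lemma~\ref{grads} and the tubular-neighborhood computation in its proof, $\delta F = \iint_\Sigma (\delta F/\delta\Sigma)\,\delta\Sigma\,dS$ with $\delta\Sigma$ the normal coefficient, while if we instead vary $\zeta$ directly we compute $\delta F$ by integrating the density over the region between the graphs of $\zeta$ and $\zeta_\eps$. That region, at first order, has signed ``vertical thickness'' $\eps\,\delta\zeta$ over each point of $\MM$, so $\delta(\iiint_\DD \FF\,d^3\bfx) = \iint_\MM \FF\,\delta\zeta\,dx\,dy = \iint_\Sigma \FF\,\delta\zeta\,|\grad\phi|^{-1}\,dS$, using $dS = |\grad\phi|\,dx\,dy$ for a graph. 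Matching this against the normal-variation formula $\iint_\Sigma \FF\,\delta\Sigma\,dS$ for every density $\FF$ forces $\delta\Sigma = |\grad\phi|^{-1}\,\delta\zeta$ pointwise on $\Sigma$.

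The main obstacle — really the only subtlety — is justifying that the comparison is legitimate, i.e.\ that a purely vertical displacement of the graph and a purely normal displacement with the stated coefficient trace out families of surfaces that agree to first order, differing only by a tangential reparametrization which is invisible to all the geometric and integral quantities in play. Once one accepts (as is standard, and consistent with the way $\delta\Sigma$ was introduced in the proof of Lemma~\ref{grads}) that first-order surface variations are detected only through their normal component, the identity $\delta\zeta = |\grad\phi|\,\delta\Sigma$ is immediate from $\bfN\cdot(0,0,1) = 1/|\grad\phi|$. I would therefore present the short normal-projection computation as the proof, with the functional-matching argument above available as the rigorous underpinning.
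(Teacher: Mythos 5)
Your proposal is correct and follows essentially the same route as the paper: the paper also defines $\delta\Sigma$ as the normal component of the variation of the embedding, takes the vertical family $\bfX_\eps=(u,v,\zeta_\eps(u,v))$, and computes $(0,0,\delta\zeta)\cdot\grad\phi/|\grad\phi|=\delta\zeta/|\grad\phi|$. Your additional remarks on discarding the tangential part and on matching the two variation formulas against arbitrary densities are sound supplementary justification but not needed beyond what the paper does.
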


\begin{proof}
Let $\bfX:U\mapsto \R^3$ be a local embedding of $\Sigma$ in $\R^3$;
and let $\bfX_\varepsilon$ be a one parameter family of embeddings,
with $\bfX_0=\bfX$. Then
\begin{equation*}
\delta\Sigma=\frac{d}{d\varepsilon}\Big\vert_{\varepsilon=0}
\left(\bfX_\varepsilon (u,v)\right)\cdot\nu.
\end{equation*}
Let $\Sigma$ be defined by $\phi=0,$ $\phi=z-\zeta$. Then
$\bfX_\varepsilon=(u,v,\zeta_\varepsilon(u,v);$ and
\begin{equation*}
\delta\Sigma=\frac{d}{d\varepsilon}\Big\vert_{\varepsilon=0}
\begin{pmatrix}u\\v\\
\zeta_\varepsilon(u,v)
\end{pmatrix}\cdot\nu=
\begin{pmatrix}0\\0\\
\delta\zeta
\end{pmatrix}\cdot
\frac{\grad\phi}{|\grad\phi|}=\frac{\delta\zeta}{|\grad\phi|}.
\end{equation*}
\end{proof}

%-------------------------------------------------------------

\section{Variational principles for traveling waves}

The Hamiltonian structure of the equations for gravity waves can be
used to obtain variational principles for traveling waves -- waves
of constant speed and shape. Such a wave is a stationary solution of
the Hamiltonian system in a Galilean frame moving with the wave;
thus the wave is a critical point for the Hamiltonian, computed in
such a reference frame. We apply the method here to the general case
of gravity waves on a horizontal surface.  The variational principle
for irrotational flows given below appears to be new.

A variational approach, if successful, would  permit a global
treatment of the existence of traveling waves by the direct methods
of the calculus of variation; but so far,  the existence of
traveling waves for potential flows of low amplitude have been
proved by perturbation methods. The first existence theorems were
given independently for periodic wave trains by Levi-Civita
\cite{Lev25}  and  Struik \cite{Str26} in the case of finite depth.
The existence of the solitary wave, which is a more difficult
problem, was first proved by Friedrichs and Hyers \cite{FH54}, since
the bifurcation problem in this case is a singular perturbation
problem (see the discussion by Sattinger \cite{DHS06}). These
authors used conformal mapping techniques. A dynamical systems
approach to the existence of traveling waves has been developed by
Kirchg\"assner \cite{Kir88}; Amick and Toland  \cite{AT81} have
shown that periodic wave trains tend to a solitary wave in the limit
as the period tends to infinity.

In the direct method, one first uses compactness properties of the
functional to obtain a minimum from a minimizing sequence. In
general, this guarantees only a weak solution of the associated
Euler-Lagrange equations. In many cases, these are elliptic
equations, and it is possible to prove sufficient regularity of the
weak solution to show that in fact it is a classical solution to the
problem. (See Alt and Caffarelli \cite{AC81} for functionals of
Friedrichs' type.) For the present, we simply indicate the method
for the problems discussed here in the theorems below.

\begin{theorem}\label{var1}
Euler's equations for gravity waves are the Euler-Lagrange equations
for the functional
\begin{equation}\label{H}
    \HH(\varphi,\zeta) = \iint\limits_{\DD_\zeta} \left[ \frac12 \left[ (\grad\varphi)^2-1 \right] + \lambda (y-1)_{+} \right] \,d^2\bfx \, ,
\end{equation}
where
\begin{equation*}
    y_{+} = \left\{
              \begin{array}{ll}
                0, & \hbox{$y \le 0$;} \\
                y, & \hbox{$y \ge 0$.}
              \end{array}
            \right.
\end{equation*}
\begin{equation*}
    \DD_\zeta = \set{ (x,y)\,:\,-\infty<x<\infty,\ 0\le y\le 1+\zeta(x) };
\end{equation*}
and the minimum is taken over all functions $\varphi$ for which
\begin{equation*}
    \iint\limits_\DD \left[ (\varphi_x-1)^2+\varphi_y^2 \right] \, d^2\bfx < +\infty .
\end{equation*}
If $(\varphi,\zeta)$ is a local minimum of $\HH$, then $\varphi$ is
harmonic on the interior of $\DD_{\zeta}$; if $\zeta$ is $C^1$ and
$\varphi\in H^2(\DD)$, then the kinematic and Bernoulli equations
hold on the free surface.
\end{theorem}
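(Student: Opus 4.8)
The plan is to compute the first variation of $\HH$ separately in $\varphi$ and in $\zeta$, to recognize the two resulting natural conditions as the kinematic equation \eqref{kin} and Bernoulli's equation \eqref{bern} written in the moving frame (that is, \eqref{stat_Bernoulli}), and to track the regularity needed along the way; only the vanishing of the first variation at a critical point is used, so a local minimum is more than enough. First I would freeze $\zeta$ and vary $\varphi\mapsto\varphi+\varepsilon\,\delta\varphi$ over smooth functions of finite relative Dirichlet norm. Since $\DD_\zeta$ is then unchanged, $\frac{d}{d\varepsilon}\HH\big\vert_{\varepsilon=0}=\iint_{\DD_\zeta}\grad\varphi\cdot\grad\,\delta\varphi\,d^2\bfx=-\iint_{\DD_\zeta}\Delta\varphi\,\delta\varphi\,d^2\bfx+\int_{\partial\DD_\zeta}\varphi_\nu\,\delta\varphi\,dS$. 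Taking $\delta\varphi$ supported in the interior gives $\Delta\varphi=0$ there in the weak sense, hence (Weyl's lemma) $\varphi$ is smooth in the interior of $\DD_\zeta$; taking $\delta\varphi$ supported near $y=0$ gives $\varphi_y=0$ on the bottom; and taking $\delta\varphi$ nonvanishing on $y=1+\zeta(x)$ gives $\varphi_\nu=0$ on the free surface, which — the upward unit normal there being $(-\zeta_x,1)/\sqrt{1+\zeta_x^2}$ — is precisely the kinematic equation $\varphi_y=\varphi_x\zeta_x$. For these boundary terms to make sense one needs $\grad\varphi$ to have an $L^2$ trace on $\partial\DD_\zeta$, which is where the hypotheses $\varphi\in H^2(\DD)$ and $\zeta\in C^1$ enter, together with enough decay of $\delta\varphi$ as $x\to\pm\infty$ to discard the contributions at spatial infinity, supplied by the finiteness of the relative Dirichlet integral.

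Next I would freeze $\varphi$ and replace $\zeta$ by $\zeta+\varepsilon\,\delta\zeta$, so that now the domain moves. For this I would reuse the tubular-neighborhood computation from the proof of Lemma \ref{grads}, combined with Proposition \ref{graph} to pass between the normal variation $\delta\Sigma$ and the vertical variation $\delta\zeta$: for $F=\iint_{\DD_\zeta}\FF(\bfv,\bfx)\,d^2\bfx$ one gets $\delta F=\int_{\R}\FF(x,1+\zeta(x))\,\delta\zeta\,dx$. Applying this to $\HH$, with $\FF=\tfrac12[(\grad\varphi)^2-1]+\lambda(y-1)_{+}$ evaluated on the free surface, forces the second natural condition $\tfrac12[(\grad\varphi)^2-1]+\lambda\zeta=0$ on $y=1+\zeta(x)$, which is the stationary Bernoulli equation of \eqref{stat_Bernoulli} — the truncated potential $(y-1)_{+}$ differing from the physical gravitational potential only by an additive constant on the configurations under consideration (cf. \eqref{totHam} and the accompanying Remark). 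Collecting the two variations, a critical point of $\HH$ satisfies $\Delta\varphi=0$ in $\DD_\zeta$, $\varphi_y=0$ on the bottom, and the kinematic and Bernoulli equations on the free surface, i.e. Euler's equations for a steady wave; harmonicity is an interior statement and so holds for a local minimizer with no boundary regularity at all, while the two free-surface identities are exactly where $\zeta\in C^1$ and $\varphi\in H^2(\DD)$ are invoked.

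The step I expect to be the main obstacle is the $\zeta$-variation: making the moving-domain differentiation rigorous on the unbounded strip, with the merely Lipschitz integrand $(y-1)_{+}$ and a general admissible $\delta\zeta$, and verifying that no boundary terms are lost as $x\to\pm\infty$. A secondary point deserving care is that $\varphi$ and $\zeta$ may legitimately be varied independently here — unlike in Zakharov's formulation of \S\ref{zak}, there is no constraint forcing $\varphi$ to be the harmonic extension of a prescribed surface trace — so the two natural conditions may indeed be read off one at a time; and it is worth recording that the finite-relative-Dirichlet-norm hypothesis does double duty, both making $\HH$ finite and forcing $\grad\varphi\to(1,0)$ as $x\to\pm\infty$, which is what pins the Bernoulli constant to $c^2/2=\tfrac12$.
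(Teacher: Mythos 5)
Your proposal is correct and follows essentially the same route as the paper: compute the first variation of $\HH$, read off harmonicity from interior variations of $\varphi$, obtain the kinematic condition as the natural boundary condition $\varphi_\nu=0$ on the free surface, and obtain the stationary Bernoulli equation from the domain variation in $\zeta$ (the paper folds the two variations into one joint computation and cites \S\ref{zak} where you cite Lemma \ref{grads} and Proposition \ref{graph}, but the content is identical). Your added remarks on traces, decay at infinity, and the independence of the $\varphi$- and $\zeta$-variations are consistent with, and slightly more careful than, the paper's argument.
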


\begin{remark}
The Hamiltonian \eqref{H} is the renormalization of the Hamiltonian
in the moving frame. By carrying out the integration in $y$ we
obtain
\begin{equation*}
\iint\limits_\DD (y-1)_+\,d^2\bfx=\frac12\int_{-\infty}^\infty
\zeta^2\,dx;
\end{equation*}
thus $\HH$ can also be written
\begin{equation*}
\HH(\varphi,\zeta)=\iint\limits_\DD \frac12
\left[(\grad\varphi)^2-1\right]
\,d^2\bfx+\frac\lambda2\int_{-\infty}^\infty \zeta^2(x)\,dx.
\end{equation*}
If $\varphi\in H^2(\DD)$ and $\zeta\in C^1$, then $\grad\varphi$ has
an $L^2$ trace on the boundary $y=\zeta$, and Stokes theorem
applies.
\end{remark}

\begin{proof}
Let $(\varphi,\zeta)$ be a minimizer of $\HH$ and suppose that
$\zeta$ is $C^1$ and $\varphi\in H^2(\DD)$.  Let
$(\varphi_\varepsilon,\zeta_\varepsilon)$ be a one parameter family
of admissible functions and denote the corresponding domains by
$\DD_\varepsilon$. By the calculations in \S\ref{zak} we have
\begin{align*}
\delta \HH(\delta\varphi,\delta\zeta)
    & = \pd{H(\varphi_\varepsilon, \zeta_\varepsilon)}{\varepsilon}\Big\vert_{\varepsilon=0} \\
    & = - \iint\limits_{\DD} \Delta\varphi\delta\varphi \, d^2 \bfx
      + \oint\limits_{\partial \DD} \varphi_\nu\delta\varphi\,ds
      + \inmp \left[\frac12(\grad\varphi)^2-\frac12+\lambda\zeta\right] \delta\zeta\,dx \\
    & = 0,
\end{align*}
for all admissible $\delta\varphi$, $\delta \zeta$.

Since the bottom is fixed, $\varphi_\nu=0$ on $y=0$. We first
restrict ourselves to variations for which
$\delta\zeta=\delta\varphi\big\vert_S=0$. Then the double integral
must vanish for a set of variations $\delta\varphi$ which are dense
in $L^2(\DD)$; it follows that $\varphi$ is harmonic in the interior
of $\DD$. As before, $\varphi_\nu ds=\grad\varphi\cdot
(-\zeta_x,1)dx=(\varphi_y-\zeta_x\varphi_x)dx;$ and so
\begin{equation*}
\delta \HH=\inmp
\left[\left(\frac{(\grad\varphi)^2-1}2+\lambda\zeta\right)
\delta\zeta+(\varphi_y-\varphi_x\zeta_x)\,\delta\varphi\right]dx.
\end{equation*}
Setting first $\delta\zeta=0$ and letting $\delta\varphi$ vary on
$\Sigma$, we obtain the kinematic equation on the free surface.
Therefore the second term always vanishes. Now allowing
$\delta\zeta$ to vary, we see that Bernoulli's equation holds on
$\Sigma$.
\end{proof}

%-------------------------------------------------------------

\section{A variational problem with constraint}

Whereas Friedrich's paper shows that Bernoulli's equation is not
obtained when the functional $J$ is minimized with respect to the
stream function, Constantin et. al. showed in \cite{ASS06} that
traveling gravity waves in the rotational case are obtained as
extremals of a variational problem for the stream function with
constraints. The existence of traveling water waves with vorticity
was established in \cite{CS04} for the periodic case. In a recent PhD thesis at Brown University, V. Hur
\cite{Hur} has constructed solitary waves with non-zero vorticity.  Some of
their qualitative properties were investigated in \cite{CE04}.

In the irrotational case we have
\begin{theorem}\label{constrained}
Define the set of admissible functions $\KK=\{\psi,\zeta\}$ with the
following properties
\begin{align*}
i) & \qquad\inmp\zeta(x)dx=m;\ \inmp \zeta^2\,dx<\infty;\\[4mm]
ii) & \qquad \psi(x,0)=0;\ \psi(x,1+\zeta(x))=1,\\[4mm]
iii) & \qquad \iint\limits_\DD
\left[\psi_x^2+(\psi_y-1)^2\right]d^2\bfx <+\infty.
\end{align*}
Consider the variational problem
\begin{equation*}
\lambda = \inf_{\KK} \frac{\iint_{\DD} \left[ (\grad\psi)^2-1
\right] d^2\bfx}{\inmp\zeta^2\,dx}.
\end{equation*}
Let $(\psi,\zeta)$ be a minimizer in $\KK$ of the above variational
principle. Then $\psi$ is harmonic in the interior of $\DD$. If
$\zeta$ is $C^1$, and $\psi\in H^2(\DD)$, then the Bernoulli
equation is satisfied on the free surface $\psi=1$. Hence minima of
the above variational problem provide an irrotational flow for the
gravity wave problem.
\end{theorem}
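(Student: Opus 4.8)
The plan is to compute the first variation of the Rayleigh quotient at a minimizer and identify the Euler--Lagrange equations, paralleling the unconstrained argument in Theorem~\ref{var1}. Write $N(\psi,\zeta)=\iint_\DD[(\grad\psi)^2-1]\,d^2\bfx$ and $D(\zeta)=\inmp\zeta^2\,dx$, so that the infimum is $\lambda=N/D$ at the minimizer $(\psi,\zeta)$. For an admissible one-parameter family $(\psi_\eps,\zeta_\eps)\in\KK$ with $(\psi_0,\zeta_0)=(\psi,\zeta)$, the stationarity condition reads $\delta N = \lambda\,\delta D$, i.e.
\begin{equation*}
\delta N(\delta\psi,\delta\zeta) = 2\lambda\inmp \zeta\,\delta\zeta\,dx.
\end{equation*}
The admissible variations are constrained: condition $(i)$ forces $\inmp\delta\zeta\,dx=0$ (the mass $m$ is fixed), and condition $(ii)$ forces $\delta\psi=0$ on $y=0$ and the \emph{total} variation of $\psi$ along the moving top boundary $y=1+\zeta(x)$ to vanish, i.e. $\delta\psi\big\vert_S + \psi_y\,\delta\zeta = 0$ on $S$. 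So unlike the Zakharov setup, $\delta\psi$ is not free on $S$; this coupling is the essential new feature.

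Next I would carry out the variation of $N$ exactly as in \S\ref{zak} and the proof of Theorem~\ref{var1}: since the domain moves with $\zeta$,
\begin{equation*}
\delta N = -2\iint\limits_\DD \Delta\psi\,\delta\psi\,d^2\bfx + 2\oint\limits_{\partial\DD}\psi_\nu\,\delta\psi\,ds + \inmp\left[(\grad\psi)^2-1\right]\delta\zeta\,dx,
\end{equation*}
where the last term comes from the normal motion of the boundary. Splitting the boundary integral into bottom (where $\psi_\nu=0$) and top, and using $\psi_\nu\,ds = (\psi_y-\psi_x\zeta_x)\,dx$ on $S$, this becomes $-2\iint_\DD\Delta\psi\,\delta\psi + 2\inmp(\psi_y-\psi_x\zeta_x)\,\delta\psi\,dx + \inmp[(\grad\psi)^2-1]\,\delta\zeta\,dx$. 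First take variations supported in the interior with $\delta\zeta\equiv0$ and $\delta\psi=0$ on $\partial\DD$: the boundary terms drop, the constraints are satisfied to first order, and density of such $\delta\psi$ in $L^2(\DD)$ forces $\Delta\psi=0$ in the interior. So $\psi$ is harmonic, as claimed.

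It remains to extract Bernoulli's equation from the boundary terms. With $\psi$ harmonic and using the constraint $\delta\psi\big\vert_S = -\psi_y\,\delta\zeta$ on $S$, the stationarity condition $\delta N = 2\lambda\inmp\zeta\,\delta\zeta\,dx$ reduces to
\begin{equation*}
\inmp\left[-2\psi_y(\psi_y-\psi_x\zeta_x) + (\grad\psi)^2 - 1 - 2\lambda\zeta\right]\delta\zeta\,dx = 0
\end{equation*}
for all $\delta\zeta$ with $\inmp\delta\zeta\,dx=0$. The main obstacle is this single linear constraint on $\delta\zeta$: it means the bracketed expression need not vanish pointwise but only equals a constant (a Lagrange multiplier for the mass constraint $\inmp\zeta\,dx=m$). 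So the conclusion is $-2\psi_y(\psi_y-\psi_x\zeta_x) + (\grad\psi)^2 - 1 - 2\lambda\zeta = \text{const}$ on $S$. Now I simplify the left side: on $S$, $\psi\equiv1$ so the tangential derivative vanishes, $\psi_x + \psi_y\zeta_x = 0$ along $S$, hence $\psi_x = -\psi_y\zeta_x$; substituting gives $\psi_y-\psi_x\zeta_x = \psi_y(1+\zeta_x^2)$ and $(\grad\psi)^2 = \psi_y^2(1+\zeta_x^2)$, so $-2\psi_y^2(1+\zeta_x^2) + \psi_y^2(1+\zeta_x^2) - 1 - 2\lambda\zeta = -(\grad\psi)^2 - 1 - 2\lambda\zeta = \text{const}$. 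Thus $\tfrac12(\grad\psi)^2 + \lambda\zeta = \text{const}$ on the free surface $\psi=1$, which is precisely Bernoulli's equation (with the constant absorbing $c^2/2$ as in \eqref{stat_Bernoulli}). The kinematic condition here is automatic: $\psi=1$ on $S$ is itself the statement that $S$ is a streamline, which for the stream function is the kinematic condition $\psi_x\zeta_x=\psi_y$ rewritten. Finally, note $\lambda$ emerging as the Rayleigh quotient value is exactly the Froude parameter $gh/c^2$ of \eqref{bern2}, so minimizers indeed solve the traveling gravity-wave problem.
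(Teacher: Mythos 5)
Your proposal is correct and follows essentially the same route as the paper's proof: stationarity of $N-\lambda D$ at the minimizer, interior variations to get $\Delta\psi=0$, the coupling $\delta\psi+\psi_y\,\delta\zeta=0$ on the free surface together with $\psi_x=-\psi_y\zeta_x$, and the mass constraint $\inmp\delta\zeta\,dx=0$ forcing the boundary integrand to be constant. The only cosmetic difference is that the paper evaluates the constant explicitly ($C=2$) by letting $x\to\pm\infty$ where $\zeta\to 0$ and $(\grad\psi)^2\to 1$, whereas you leave it absorbed into Bernoulli's constant.
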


\begin{proof}
Let $(\psi,\zeta)$ be a minimizer, and let
$\psi_\varepsilon,\zeta_\varepsilon$ be a family of admissible
functions with $\psi_0=\psi$ and $\zeta_0=\zeta$. Then
$J(\varepsilon)\ge 0$ and $J(0)=0$, where
\begin{equation*}
J(\varepsilon) =  \iint\limits_{\DD_\varepsilon} \left[
(\grad\psi_\varepsilon)^{2} - 1\right]
\,d^2\bfx-\lambda\inmp\zeta_\varepsilon^2\,dx.
\end{equation*}
Then $\delta J(\delta\psi,\delta\zeta)=0$ for all admissible
variations, where
\begin{align*}
\delta J & = \iint\limits_{\DD} 2\grad\psi\cdot\grad\,\delta\psi \, d^2\bfx + \inmp \left[ (\grad\psi)^2-1-2\lambda\zeta \right]\delta\zeta\,dx \\
         & = -2\iint\limits_{\DD} \Delta\psi\,\delta\psi\,d^2\bfx +
2\oint\limits_{\partial\DD}\delta\psi\psi_\nu \,ds + \inmp \left[
(\grad\psi)^2-1-2\lambda\zeta \right] \delta\zeta \, dx.
\end{align*}

The integral over the bottom of the flow domain vanishes, since
$\psi_\nu=0$ there. On the free surface (see $\delta(2)$, p. 65
\cite{Fri33})
\begin{equation*}
\delta\psi+\psi_y\delta\zeta=0.
\end{equation*}
This follows immediately by differentiating the relation
$\psi_\varepsilon(x,1+\zeta_\varepsilon(x))\equiv 1$ with respect to
$\varepsilon$ and setting $\varepsilon$ equal to zero. Similarly,
differentiating the expression $\psi(x,1+\zeta(x))\equiv 1$ with
respect to $x$ we find that $\psi_x/\psi_y=-\zeta_x$; hence
\begin{equation*}
\psi_\nu=\grad\psi\cdot{\bf\nu}=\grad\psi\cdot\frac{\grad\psi}{||\grad\psi||}
=\frac{(\grad\psi)^2}{\sqrt{\psi_x^2+\psi_y^2}}=\frac{(\grad\psi)^2}{|\psi_y|\sqrt{1+\zeta_x^2}}.
\end{equation*}

Hence $\delta J$ reduces to
\begin{equation}\label{var}
\delta J = -2\iint\limits_{\DD} \Delta\psi \,\delta\psi \,d^2\bfx -
\inmp \left[ (\grad\psi)^2+1+2\lambda\zeta \right] \delta\zeta\,dx.
\end{equation}
First restrict the variations to fixed domains, $\delta\zeta=0$, and
the first integral must vanish for all variations $\delta\psi$ which
vanish on $\partial\DD$. Hence $\psi$ is harmonic in the interior of
$\DD$, and the double integral vanishes.

We next consider variations of the domain. Since $\int
\zeta_\varepsilon\,dx= m$, for all variations, we have
$\int\delta\zeta\,dx=0$; then the condition
\begin{equation*}
\inmp((\grad\psi)^2+1+2\lambda\zeta)\delta\zeta\,dx=0
\end{equation*}
for all such $\delta\zeta$ implies that the integrand is a constant. We therefore have
$(\grad\psi)^2+2\lambda\zeta+1=C=const.$ on the line; letting
$x\to\infty$ and noting that $\zeta\to 0$ while $(\grad\psi)^2\to
1$ we see that $C=2$, and the Bernoulli equation is satisfied.
\end{proof}

%-------------------------------------------------------------

\paragraph*{Acknowledgement} The results in this paper were obtained
during the authors' visit to the Mittag-Leffler Institute in
October, 2005, in conjunction with the Program on Wave Motion. The
authors wish to extend their thanks to the Institute for its
generous sponsorship of the program, as well as to the organizers
for their work.

%-------------------------------------------------------------

\end{document}